\newtheorem{theorem}{Theorem}
\newtheorem{lemma}{Lemma}
\newtheorem{definition}{Definition}
\newtheorem{example}{Example}
\newcommand{\comment}[1]{}
\def\cX{\mbox{$\cal{X}$}}
\def\cY{\mbox{$\cal{Y}$}}
\def\cZ{\mbox{$\cal{Z}$}}
\def\cR{\mbox{$\cal{R}$}}
\title{Zero-Error Function Computation through a Bidirectional Relay}
\author{
\IEEEauthorblockN{Jithin~Ravi and Bikash~Kumar~Dey}
\IEEEauthorblockA{Department of Electrical Engineering \\Indian Institute of Technology Bombay \\
        {\tt \{rjithin,bikash\}@ee.iitb.ac.in}
}
}
\begin{document}
\maketitle
\begin{abstract}
We consider zero error function computation in a three node wireless network. Nodes A and B observe
$X$ and $Y$ respectively, and want to compute a function $f(X,Y)$ with zero error. To achieve this, nodes A and B
send messages to a relay node C at rates $R_A$ and $R_B$ respectively. 
The relay C then broadcasts a message to A and B at rate $R_C$ to help them 
compute $f(X,Y)$ with zero error. We allow block coding, and study the 
region of rate-triples $(R_A,R_B,R_C)$ that are feasible.
The rate region is characterized in terms of
graph coloring of some suitably defined probabilistic graphs. We
give single letter inner and outer bounds which meet for some
simple examples.
We provide a sufficient condition on the joint distribution $p_{XY}$ 
under which the relay can also compute
$f(X,Y)$ if A and B can compute it with zero error.
\end{abstract}

\section{Introduction}
Distributed computation of distributed data is a common problem
in a network. Such problems in various flavours have attracted strong 
research interest in the last decade. Gathering all the data at the
nodes where a function needs to be computed is wasteful in most situations.
So intermediate nodes also help by doing some processing of the data
to reduce the communication load on the links. Such computation frameworks
are known as distributed function computation or in-network function
computation \cite{Rai_2012,Shah_2013,MishraDPDisit14,Kowshik_2012}. 

We consider the problem of function computation in a wireless network with 
three nodes as shown in Fig.~\ref{Relay_Network}.
Nodes A and B have two correlated random variables $X$ and $Y$ respectively.
They have infinite i.i.d. realizations of these random variables.
They can communicate directly to a relay node C over
orthogonal error-free links. The relay node C can broadcast a message to
both A and B. A and B receive such broadcasted message
without error. 
Nodes A and B want to compute a function $f(X,Y)$ with zero error
for all realizations of $(X,Y)$ with nonzero probability.  We allow block coding of arbitrarily
large block length $n$. For each block of $n$ data symbols, we allow
two phases of communication. In the first phase, both A and B
send individual messages to C over the respective orthogonal links.
In the second phase, the relay broadcasts a message to A and B.
We study the expected number of bits that need to be sent per computation over
the individual links (A,C), (B,C) and over the broadcast link from
C to A and B.
Since all the nodes transmit once in our protocol, we call this a one-round 
protocol. 

\begin{figure}[htbp]
\centering
\includegraphics[width=0.7\columnwidth]{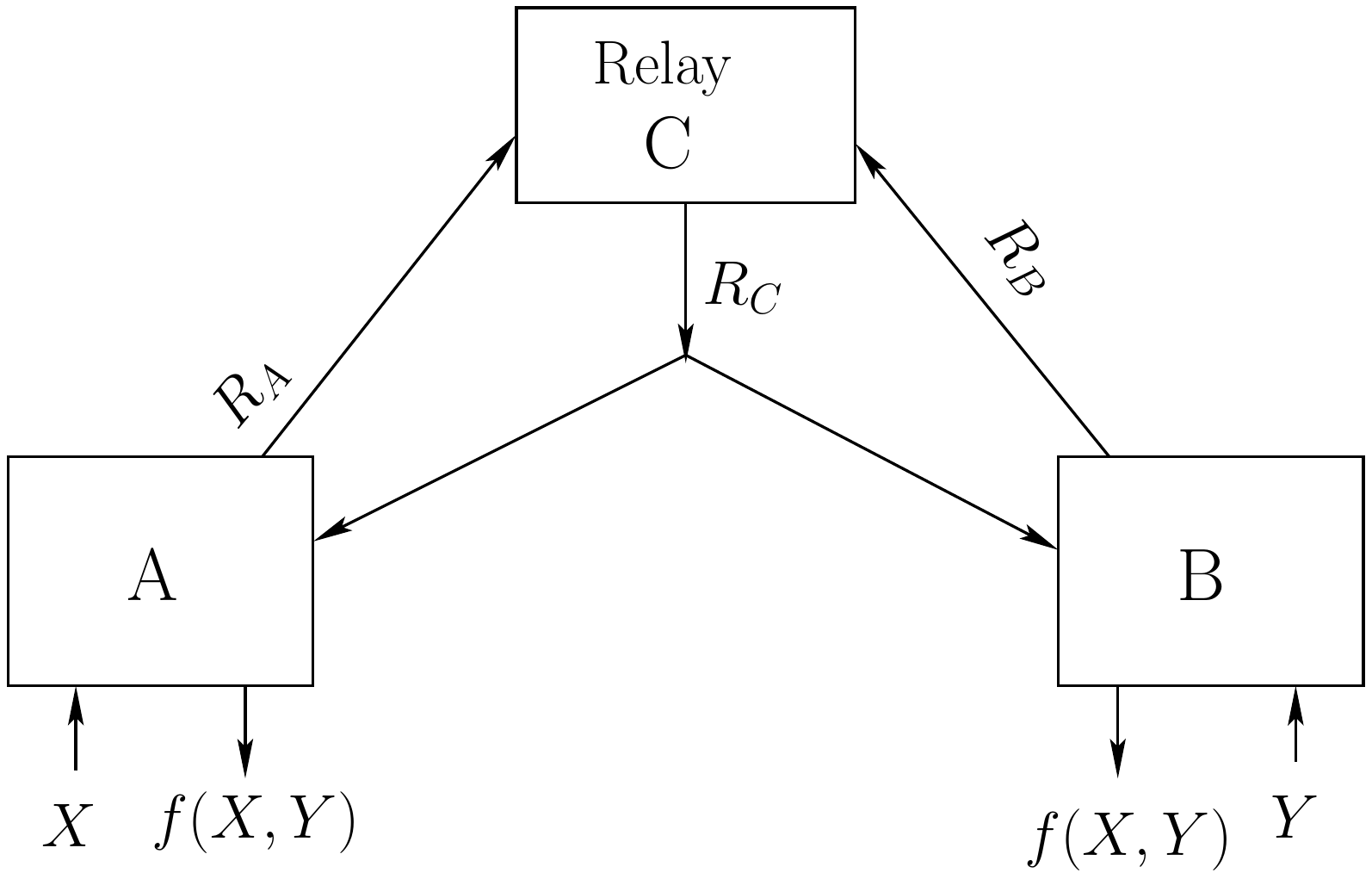}
\caption{Function computation in bidirectional relay network}
\label{Relay_Network}
\end{figure}

The problem of zero error source coding with receiver side information 
was first studied by Witsenhausen in \cite{Witsen_1976}. 
Here, the receiver has side information $Y$ and it wants to recover the
source random variable $X$ with zero error.
A confusability graph and its AND product graphs were defined, and
the minimum rate was characterized in terms of 
the chromatic number of the $n$-times AND product graph.
The same side information problem was later considered in \cite{Alon_1996}
in the context of variable length coding. The goal was to minimize the expected number of bits
per symbol that need to be sent from the transmitter. This quantity was shown to be the limit of the chromatic
entropy of the AND product of the confusability graph. This asymptotic rate was later shown~\cite{Rose_2003} to be the 
complementary graph entropy~\cite{Longo_1973} of the confusability graph.
A single letter characterization for complementary 
graph entropy is still unknown. 

In \cite{Alon_1996}, protocols are further considered for the special case of 
unrestricted inputs, where $(X,Y)$ may take values outside the support
of $p_{XY}$. It was 
shown that the minimal asymptotic rate is the chromatic entropy of the OR 
product graph of the confusability graph. It was further shown that this 
rate can be expressed as the graph entropy~\cite{Korner_1973,Simonyi_1995} 
of the confusability graph - thus resulting in a single-letter characterization. 

Distributed encoding of two correlated sources and joint decoding with
zero error is considered in \cite{Koulgi_2003}. A Single
letter characterization is given for the achievable region under 
unrestricted inputs, and this gives an inner bound for the original problem
under restricted inputs.
Another unidirectinal network is considered in \cite{Shayevitz_2014} for distributed 
computation. 
Here two distribute sources (with $X$ and $Y$) encode 
and send messages to a common relay, which in turn sends a message to 
a decoder with side-information $Z$. The decoder wants to compute
a function $f(X,Y,Z)$ with zero error. The input is assumed to be
unrestricted.  The idea of graph entropy region of a probabilistic 
graph is introduced, and some single letter inner and outer bounds for 
the graph entropy region are given. 

 For our zero-error computation problem depicted in Fig.~\ref{Relay_Network}, we 
provide a characterization of the rate region in terms of graph coloring 
of some suitably defined graphs. We provide single letter inner and 
outer bounds for the rate region. A sufficient condition on the joint 
distribution $p_{XY}$ is identified under which, the relay will
also be able to reconstruct $f(X,Y)$ for any scheme where A and B
reconstruct it with zero error.

The paper is organized as follows. Section~\ref{sec:definitions} presents 
problem formulation and some definitions. In Section~\ref{sec:results}, we 
define graph entropy region and state our main results.
Proof of the results and some examples are given in Section ~\ref{sec:proofs}.
We conclude the paper in Section~\ref{Conclusion}.

\section{Problem formulation and some definitions}
\label{sec:definitions}
\subsection{Problem formulation}

Nodes A and B observe $X$ and $Y$ respectively from finite alphabet
sets $\cX$ and $\cY$. $(X,Y)$ have a joint distribution $p_{XY}(x,y)$,
and their different realizations are i.i.d. In other words, $n$ consecutive
realizations $(X^n,Y^n)$ are distributed as 
$Pr(x^n,y^n)=\prod_{i=1}^{n} p_{XY}(x_i,y_i) $ for all 
$x^n = (x_1,x_2,\cdots,x_n)$ and $y^n = (y_1,y_2,\cdots,y_n)$.

The support set of $(X,Y)$ is defined as
$$S_{XY}= \{ (x,y): p_{XY}(x,y)>0 \}.$$

On observing $X^n$ and $Y^n$ respectively, A and B send messages
$M_A$ and $M_B$ using prefix free codes such that
$E|M_A|=nR_A$ and $E|M_B|=nR_B$. Here $|.|$ denotes the length of the
respective message in bits.
C then broadcasts a message $M_C$ with $E|M_C|=nR_C$
to A and B. Each of A and B then decode $f(X_i,Y_i);\;i=1,2,\cdots,n$ from
the information available to them. A length-$n$ scheme
is a quintuple $(\phi_{A},\phi_{B},\phi_{C},\psi_A,\psi_B)$,
where

$$ \phi_{A}: \cX^{n} \longrightarrow \{ 0,1\}^{*}, \quad
\phi_{B}: \cY^{n} \longrightarrow \{ 0,1\}^{*}$$
\mbox{and } 
$$\phi_{C}:\phi_{A}(\cX^{n}) \times \phi_{B}(\cY^{n}) \longrightarrow \{ 0,1\}^{*}$$

are prefix free encoding functions of A, B and C respectively, and

$$\psi_A: \cX^n\times \phi_{C}\left(\phi_{A}(\cX^{n}) \times \phi_{B}(\cY^{n})\right) \longrightarrow \cZ^n$$
and
$$\psi_B: \cY^n \times \phi_{C}\left(\phi_{A}(\cX^{n}) \times \phi_{B}(\cY^{n})\right)  \longrightarrow \cZ^n$$
are the decoding functions of A and B. Here $\{ 0,1\}^{*}$ denotes the set of 
all finite length binary sequences. Let $(\psi_A(\cdot))_i$ and 
$(\psi_B(\cdot))_i$ denote
the $i$-th components of $\psi_A(\cdot)$ and $\psi_B(\cdot)$ respectively.
A scheme is called a {\it zero-error scheme}
if for each $X^n\in \cX^n, Y^n\in \cY^n$, and $i=1,2,\cdots,n$,
\begin{align}
& (\psi_A(X^n,\phi_C(\phi_A(X^n),\phi_B(Y^n))))_i = f(X_i,Y_i)\notag\\
\mbox{and }\notag\\
& (\psi_B(Y^n,\phi_C(\phi_A(X^n),\phi_B(Y^n))))_i = f(X_i,Y_i)\notag
\end{align}
if $(X_i,Y_i) \in S_{XY}$.

The rate triplet $(R_{A}, R_{B}, R_{C})$ of a scheme is defined as
\begin{eqnarray*}
R_{A} & = & \frac{1}{n} \sum_{x^{n}} Pr(x^{n}) \mid \phi_{A}(x^{n}) \mid \\
R_{B} & = & \frac{1}{n} \sum_{y^{n}} Pr(x^{n}) \mid \phi_{B}(y^{n}) \mid \\
R_{C} & = & \frac{1}{n} \sum_{(x^{n},y^{n})} Pr(x^{n},y^{n}) \mid \phi_{C}(\phi_{A}(x^{n}), \phi_{B}(y^{n})) \mid.
\end{eqnarray*}
A rate-triple is said to be achievable if there is a zero-error scheme
of some length with that rate-triple.
The rate-region $\cR(f,X,Y)$ is the closure of the set of achievable
rate-triples.

\subsection{Graph theoretic definitions}

Let $G$ be a graph with vertex set $V(G)$ and edge set $E(G)$. A set 
$S \subseteq V(G) $ is called an
independent set if no two vertices in $S$ are adjacent in $G$.
The $n$-fold OR product of $G$, denoted by $G^{\vee n}$,
is defined by $V(G^{\vee n}) = (V(G))^n$ and $E(G^{\vee n})=
\{(v^n,v'^n): (v_i,v'_i)\in E(G) \mbox{ for some } i\}$.

For a graph $G$ and a random variable $X$ taking values in $V(G)$,
$(G,X)$ represents a {\it probabilistic graph}.
Chromatic entropy~\cite{Alon_1996} of $(G,X)$ is defined as
\begin{align*}
 H_{\chi}(G,X) &= \mbox{min} \{H[c(X)]: \: c  \mbox{ is a coloring of } G \}.
\end{align*}
Let $W$ be distributed over the power set $2^{\cX}$. 
The graph entropy of the probabilistic graph $(G,X)$ is defined as
\begin{align}
 H_G(X) = \min_{X\in W \in \Gamma(G)} I(W;X),
 \label{eq:gentropy}
\end{align}
where $\Gamma(G)$ is the set of all independent sets of $G$. Here the minimum is taken over all 
conditional distribution $p_{W|X}$ which is non-zero only for $X\in W$.
The following interesting result was shown in
\cite{Alon_1996}.
\begin{align}
\lim\limits_{n\to\infty} \frac{1}{n} H_{\chi}(G^{\vee n}, X^n) = H_G(X).
\label{eq:gpentropy}
\end{align}




\comment{
\begin{table} 
\centering
\begin{tabular}{|l|l|}
\hline
$x$ & $\phi_A(x)$  \\\hline
$0$ & R  \\
$1$ & B  \\
$2$ & R  \\
$3$ & B  \\
$4$ & G  \\\hline
\end{tabular}
\quad
\begin{tabular}{|l|l|}
\hline
$y$ & $\phi_Y(y)$  \\\hline
$0$ & R  \\
$1$ & B  \\
$2$ & R  \\
$3$ & B  \\
$4$ & G  \\\hline
\end{tabular}
\quad
\begin{tabular}{|l|l l l|}
\hline
\diaghead{\theadfont Fontsize increas}%
{$\phi_A(x)$}{$\phi_B(y)$}&
\thead{R}&\thead{B}&\thead{G}\\ \hline
R & Y & C &  \\
B & C & Y & C \\   
G & C &  & Y  \\\hline
\end{tabular}\\
\caption{Mappings by Node A,B and C}
\label{Mappings}
\end{table}
}

We now define some graphs suitable for addressing our problem. 
For a function $f(x,y)$ defined over $\cX \times \cY$, we define a graph called 
$f$-modified rook's graph. 
A rook's graph $G_{XY}$ over $\cX \times \cY$ is defined by the  
vertex set  $\cX \times \cY$ and edge set
$\{((x,y),(x',y')):x=x' \mbox{ or } y=y', \mbox{ but } (x,y)\neq (x',y')\}$.

\begin{definition}
For a function $f(x,y)$ the $f$-modified rook's graph 
$G_{XY}^{f}$ has its vertex set $S_{XY}$, and
two vertices $(x_{1}, y_{1})$ and $(x_{2}, y_{2})$ 
are adjacent if and only if 
they are adjacent in the rook's graph 
$G_{XY}$ and $f(x_{1},y_{1}) \neq f(x_{2},y_{2})$. 
\end{definition}

\begin{figure}[h]
 \centering
 \begin{subfigure}[b]{0.25\textwidth}
\includegraphics[scale =0.3]{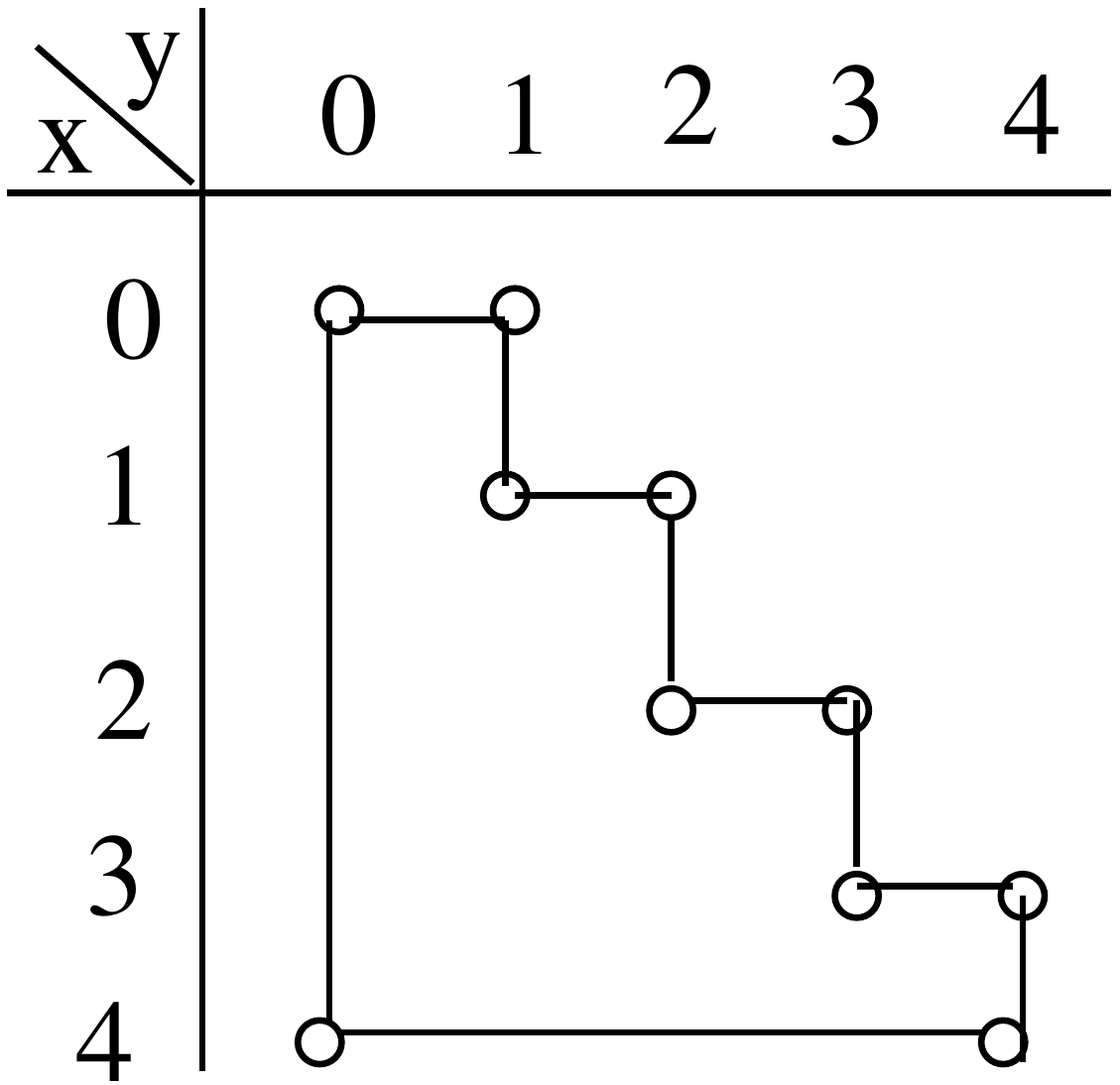}
\caption{$f$-modified rook's graph for $f(x,y)$ in \eqref{eq:function}}
\label{Rook_graph}
\end{subfigure}
\quad
 \begin{subfigure}[b]{0.2\textwidth}
\includegraphics[scale=0.3]{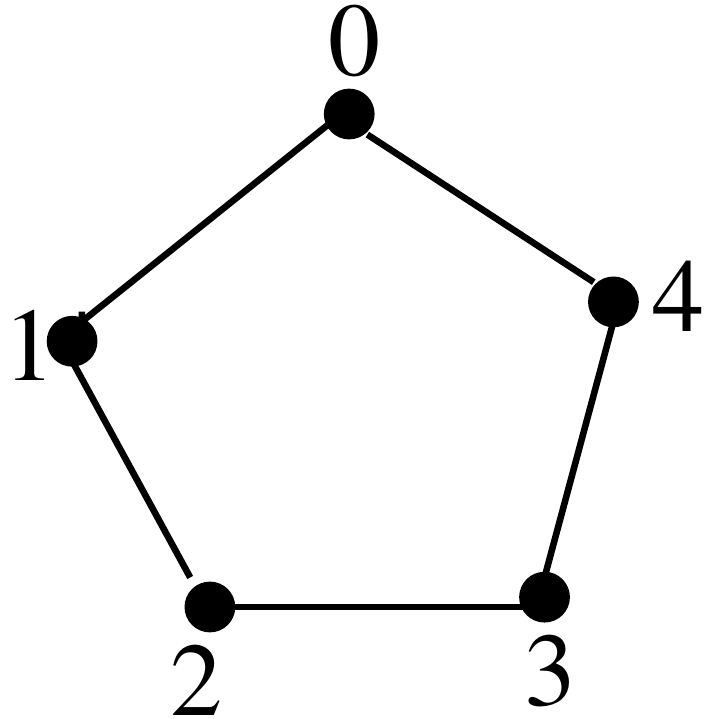}
\caption{$f$-confusability graphs $G_{X|Y}^f,G_{Y|X}^f$ for $f(x,y)$ in \eqref{eq:function}}
\label{Pentagon_Graph}
\end{subfigure}
\caption{$f$-modified rook's graph and $f$-confusability graph}
\label{Fig2}
\end{figure}

For example, let us consider $X,Y \in \{0,1,2,3,4\}$ with distribution
\begin{equation*}
p(x,y) = \left\{
\begin{array}{cl}
  \frac{1}{10} &  \quad \mbox{if} \; y=x \mbox{ or } y=x+1 \mbox{ mod }5\\
              0  & \quad \mbox{otherwise }
\end{array} \right. ,
\end{equation*}
and the equality function
\begin{equation}
f(x,y) = \left\{
\begin{array}{cl}
 1 & \quad \mbox{if} \; x=y\\
            0 & \quad \mbox{if} \; x\neq y.
\end{array} \right. 
\label{eq:function}
\end{equation} 
The $f$-modified rook's graph for this function is shown in Fig.~\ref{Rook_graph}.

$f$-confusability graph $G_{X|Y}^f$ of $X,Y$ and $f$ was defined in \cite{Shayevitz_2014}.
Its vertex set is $\cX$, and two vertices $x$ and $x'$ are adjacent if and only if
$\exists \; y\in \cY$ such that
$f(x,y) \neq f(x',y)$ and $(x,y),(x',y) \in S_{XY}$. $G_{Y|X}^f$
is defined similarly. For the function defined in~\eqref{eq:function}, $G_{X|Y}^f$ and $G_{Y|X}^f$
are the same graph which is shown in Fig.~\ref{Pentagon_Graph}.

\comment{
\begin{table}
 \centering

\begin{tabular}{|c|c c c c c|}
\hline
\diaghead{\theadfont Fontsize 1}%
{x}{y}
& 0 & 1 & 2 & 3 & 4 \\\hline
0& (R,R) & (R,B) &  & & \\
1&  & (B,B) & (B,R) &  & \\
2&  &  & (R,R) & (R,B) & \\
3&  &  &  & (B,B) & (B,G) \\
4& (G,R) &  &  & & (G,G) \\\hline
\end{tabular}
\caption{Coloring that relay receives}
\vspace{2 mm}

\begin{tabular}{|c|c c c c c|}
\hline
\diaghead{\theadfont Fontsize 1}%
{x}{y}
& 0 & 1 & 2 & 3 & 4 \\\hline
0& Y & C &  & & \\
1&  & Y & C &  & \\
2&  &  & Y & C & \\
3&  &  &  & Y & C \\
4& C &  &  & & Y \\\hline
 \end{tabular}
\caption{Coloring that relay transmits}
\label{Coloring}
\end{table}
}


\section{Summary of Results}
\label{sec:results}
\subsection{Characterization of $\cR(f,X,Y)$}

\comment{
In this section we give a characterization of rate region $\cR(f,X,Y)$ and state our main results.
Characterization of $\cR(f,X,Y)$ is based on the coloring of $f$-modified rook's graph $G_{XY}^f$.
We first provide an example of encoding functions used by nodes A, B and C for the function shown in Table~\ref{Pentagon_Function}.
Table~\ref{Mappings} shows the three mappings used by nodes A, B and C for the function when $n=1$.
Table~\ref{Coloring} shows the pair of colors that relay receives and the 
the color that relay transmits for these mappings..
The encoder pair $(\phi_A,\phi_B)$ constitutes a coloring of the $f$-modified 
rook's graph for the function. The mapping used by relay $\phi_C$ is also a 
coloring of that particular graph. 
Here observe that the pairs $\left(\phi_C(\phi_A(x),\phi_B(y)),x\right)$ and $\left(\phi_C(\phi_A(x),\phi_B(y)),y\right)$
can compute $f(x,y)$ for any $(x,y) \in S_{XY}$.
This shows that $(\log 3,\log 3, \log 2)$ is in the
achievable rate region for this example.
Now we define graph entropy region for characterizing the rate region $\cR(f,X,Y).$
}

We first define the chromatic entropy region of a $f$-modified rooks graph.
If $c_1$ and $c_2$ are two maps of $\cX$ and $\cY$ respectively, then
$c_1\times c_2$ denotes the map given by $(c_1\times c_2)(x,y) = 
(c_1(x),c_2(y))$.

Recall that $S_{XY}$ is the vertex set of $G_{XY}^f$.
A triplet $(c_A,c_B,c_C)$ of functions defined 
over $\cX, \cY, S_{XY}$ respectively is called a {\it color cover} for 
$G_{XY}^f$ if
\begin{enumerate}[i)]
 \item $c_A \times c_B$ and $c_C$ are colorings of $G_{XY}^f$.
 \item $c_A \times c_B$ is a refinement of $c_C$ in $S_{XY}$, i.e., $\exists$ a
mapping $\theta:(c_A\times c_B)(S_{XY}) \rightarrow c_C(S_{XY})$ such
that $\theta \circ (c_A\times c_B) = c_C$.
\end{enumerate}
{\it Chromatic entropy region} $R_{\chi}( G_{XY}^f,X,Y)$ of  $G_{XY}^f$ is defined as
\begin{align*}
  & R_{\chi}( G_{XY}^f,X,Y) \triangleq \bigcup_{(c_A,c_B,c_C)}\{(b_A,b_B,b_C): \\
  & \hspace{5 mm} b_A \geq H(c_A(X)), b_B \geq H(c_B(Y)), b_C \geq H(c_C(X,Y))  \}, 
\end{align*}
where the union is taken over all color covers of $G_{XY}^f$. 
Motivated by the result \eqref{eq:gpentropy}, we define the
{\it graph entropy region} as
\begin{equation*}
HR_{G_{XY}^f}(X,Y) \triangleq \bigcup_n \frac{1}{n} R_{\chi}\left( (G_{XY}^f)^{\vee n}, X^n,Y^n\right).
\end{equation*}	
%
\begin{theorem}
 \begin{enumerate}[(i)]
  \item $\cR(f,X,Y) = HR_{G_{XY}^f}(X,Y).$ \label{Part1}
  \item Let \label{Part2}
\begin{align*}
 \cR_{I1} \triangleq & \{(R_A,R_B,R_C) : R_A \geq H(X), R_B \geq H(Y), \nonumber\\
 & \hspace{25mm} R_C \geq H_{G_{XY}^f}(X,Y) \}\nonumber\\ 
 \cR_{I2} \triangleq & \{(R_A,R_B,R_C) :R_A \geq H_{G_{X|Y}^f}(X), \nonumber\\
 &  R_B \geq H_{G_{Y|X}^f}(Y),\\
 &  R_C \geq H_{G_{X|Y}^f}(X)+ H_{G_{Y|X}^f}(Y)\}.
  \end{align*} 
 Let $\cR_I$ be the convex hull of $\cR_{I1} \cup \cR_{I2}$. Then\\
   $\cR_I \subseteq \cR(f,X,Y)  .$
\item Let 
 \begin{align*}
  \cR_O \triangleq  & \{(R_A,R_B,R_C) :R_A \geq H_{G_{X|Y}^f}(X), \nonumber \\
 & R_B \geq H_{G_{Y|X}^f}(Y), R_C \geq H_{G_{XY}^f}(X,Y) \}.
 \end{align*}
Then $\cR(f,X,Y) \subseteq \cR_O.$\label{Part3}
\item If $G_{Y|X}^f$ and $G_{Y|X}^f$ are complete graphs, then $\cR_{I}=\cR_{I1} = \cR_O$. \label{Part4}
 \end{enumerate}
 \label{Rate_Region}
\end{theorem}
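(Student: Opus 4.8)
\emph{Part (i), achievability.} Fix $n$ and a color cover $(c_A,c_B,c_C)$ of $(G_{XY}^f)^{\vee n}$ with refinement map $\theta$, and let A prefix-encode $c_A(X^n)$ and B prefix-encode $c_B(Y^n)$. The relay reads $(c_A\times c_B)(X^n,Y^n)$, applies $\theta$ to obtain $c_C(X^n,Y^n)$, and prefix-encodes it. Since $c_C$ is a proper coloring of $(G_{XY}^f)^{\vee n}$, whose vertex set is $S_{XY}^n$, any two support inputs $(X^n,y^n),(X^n,\tilde y^n)$ receiving the same color are non-adjacent, hence $f(X_i,y_i)=f(X_i,\tilde y_i)$ for every $i$; so A, knowing $X^n$, outputs $\big(f(X_i,Y_i)\big)_i$, and symmetrically B. Coding long runs of these i.i.d.\ colors jointly drives the per-symbol rates to $\tfrac1n H(c_A(X^n))$, $\tfrac1n H(c_B(Y^n))$, $\tfrac1n H(c_C(X^n,Y^n))$, so $\tfrac1n R_{\chi}\big((G_{XY}^f)^{\vee n},X^n,Y^n\big)\subseteq\cR(f,X,Y)$, and taking the union over $n$ gives $HR_{G_{XY}^f}(X,Y)\subseteq\cR(f,X,Y)$.

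\emph{Part (i), converse, and Parts (ii)--(iii).} For the converse of (i), from a length-$n$ zero-error scheme set $c_A=\phi_A$, $c_B=\phi_B$, $c_C=\phi_C\circ(\phi_A\times\phi_B)$; the refinement condition holds with $\theta=\phi_C$, one argues from the zero-error property that $c_A\times c_B$ (hence $c_C$) properly colors $(G_{XY}^f)^{\vee n}$, and $R_A\ge\tfrac1n H(\phi_A(X^n))$ etc.\ because a prefix-free code is never shorter in expectation than the entropy, so the rate triple lies in $\tfrac1n R_{\chi}\big((G_{XY}^f)^{\vee n},X^n,Y^n\big)\subseteq HR_{G_{XY}^f}(X,Y)$. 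For the inner bound (ii): one scheme has A and B describe $X^n,Y^n$ losslessly while the relay, now holding $(X^n,Y^n)$, broadcasts a coloring of $(G_{XY}^f)^{\vee k}$ over super-blocks (rates $\to(H(X),H(Y),H_{G_{XY}^f}(X,Y))$ by \eqref{eq:gpentropy}); the other has A send a coloring of $(G_{X|Y}^f)^{\vee k}$, B a coloring of $(G_{Y|X}^f)^{\vee k}$, and the relay forward both, so that B decodes $f$ from $Y^n$ and A's color (same-colored $x,\tilde x$, being non-adjacent, agree on $f(\cdot,y)$ for every common support $y$) and symmetrically A; time-sharing gives $\cR_I$. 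For the outer bound (iii): $M_C$ is a function of $(M_A,M_B)$ and $M_B$ of $Y^n$, so $\big(f(X_i,Y_i)\big)_i$ is a function of $(M_A,Y^n)$, forcing $\phi_A$ to be a coloring of an appropriate $n$-th power of $G_{X|Y}^f$, whence $R_A\ge\tfrac1n H(\phi_A(X^n))\ge H_{G_{X|Y}^f}(X)$ by subadditivity of $\tfrac1n H_{\chi}$ of OR-powers and \eqref{eq:gpentropy}; symmetrically $R_B\ge H_{G_{Y|X}^f}(Y)$; and since $\big(f(X_i,Y_i)\big)_i$ is a function of each of $(X^n,M_C)$ and $(Y^n,M_C)$, $c_C$ must color $(G_{XY}^f)^{\vee n}$, giving $R_C\ge\tfrac1n H(M_C)\ge H_{G_{XY}^f}(X,Y)$.

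\emph{Part (iv).} If the two $f$-confusability graphs are complete, then $(G_{X|Y}^f)^{\vee n}$ is complete on $\cX^n$, so every coloring is injective and $H_{G_{X|Y}^f}(X)=H(X)$; likewise $H_{G_{Y|X}^f}(Y)=H(Y)$. Substituting into the definitions makes $\cR_{I1}$ and $\cR_O$ literally the same set. Moreover $H(X)+H(Y)\ge H(X,Y)\ge H_{G_{XY}^f}(X,Y)$ (the singleton independent-set cover gives $H_G(X,Y)\le H(X,Y)$), so the $R_C$-constraint defining $\cR_{I2}$ is weaker than that of $\cR_{I1}$, i.e.\ $\cR_{I2}\subseteq\cR_{I1}$; since $\cR_{I1}$ is convex, $\cR_I=\mathrm{conv}(\cR_{I1}\cup\cR_{I2})=\cR_{I1}=\cR_O$. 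Combined with (ii) and (iii), this also pins down $\cR(f,X,Y)$ exactly in this case.

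\emph{Main obstacle.} The genuinely delicate point, shared by the converse of (i) and by (iii), is that the zero-error constraint only binds on $S_{XY}$: the naive ``equal messages $\Rightarrow$ some decoder errs'' argument directly yields colorings of the \emph{restricted} $n$-th confusability graphs, whose edge sets can be strictly smaller than those of the OR-powers, since two tuples confusable in the OR-power need not be confusable via a single support-consistent companion tuple. Bridging this gap---e.g.\ by exhibiting, for every OR-power edge, an explicit pair of support inputs on which the decoders must disagree, or by showing that the two relevant asymptotic chromatic entropies coincide---is where the real work lies; the remaining bookkeeping (prefix length $\ge$ entropy, subadditivity of the chromatic entropy of OR-powers, time-sharing for the convex hull) is routine.
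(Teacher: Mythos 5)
Your route coincides with the paper's on every part: achievability of (i) via color covers of $(G_{XY}^f)^{\vee n}$ and the converse via the induced triple $(\phi_A,\phi_B,\phi_C\circ(\phi_A\times\phi_B))$; the two explicit schemes plus time-sharing for (ii); per-$n$ chromatic-entropy lower bounds combined with the limit \eqref{eq:gpentropy} for (iii); and ``complete confusability graph $\Rightarrow$ graph entropy equals Shannon entropy'' for (iv). Your part (iv) is in fact more complete than the paper's, which asserts $\cR_I=\cR_{I1}$ without supplying the inclusion $\cR_{I2}\subseteq\cR_{I1}$ that you derive from $H(X)+H(Y)\ge H_{G_{XY}^f}(X,Y)$.

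The genuine gap is the one you flag yourself and then decline to close: you never prove that equal messages force proper colorings of the full OR-powers rather than of possibly sparser ``restricted'' graphs, and a proof that ends with ``this is where the real work lies'' is not a proof. The point you are missing is that the difficulty is largely dissolved by the problem's definition of a zero-error scheme, which you appear not to have pinned down: the paper demands correctness for \emph{every} pair $(x^n,y^n)\in\cX^n\times\cY^n$ at \emph{every} coordinate $i$ with $(x_i,y_i)\in S_{XY}$, not merely for jointly supported sequences. Consequently, if $x^n$ and $x'^n$ are adjacent in $(G_{X|Y}^f)^{\vee n}$ via a single-coordinate witness $y_i$, you may extend $y_i$ by \emph{arbitrary} letters at the other coordinates to obtain $y^n$; both $(x^n,y^n)$ and $(x'^n,y^n)$ are then constrained at coordinate $i$, and $\phi_A(x^n)=\phi_A(x'^n)$ would force B to err there. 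This is exactly the ``explicit pair of support inputs'' you ask for, and it settles Lemma~\ref{Nodes_mapping_n} and the $R_A$, $R_B$ bounds in (iii). The one place where your worry genuinely survives is the relay map: an edge of $(G_{XY}^f)^{\vee n}$ may join $(x^n,y^n)$ and $(x'^n,y'^n)$ with $x^n\neq x'^n$ \emph{and} $y^n\neq y'^n$, and the hybrid argument only separates $(x^n,y^n)$ from $(x^n,y'^n)$, which does not separate it from $(x'^n,y'^n)$ unless $\phi_A(x^n)=\phi_A(x'^n)$. The paper's Lemma~\ref{Relay_mapping_n} passes over exactly this case, so your instinct about where the delicacy sits is sound; but to turn your outline into a proof you must either carry out that case analysis (or show it cannot affect the asymptotic chromatic entropies) or restrict the necessity claim to the row/column block-edges and argue the characterization is unaffected.
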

%

\begin{theorem}
 If $p(x,y)>0 \quad \forall \, (x,y) \in \cX \times \cY$, then for any 
zero-error scheme the relay can compute $f(x,y)$ for all $(x,y) \in 
\cX\times \cY$.
\label{Relay_function}
\end{theorem}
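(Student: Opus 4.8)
The plan is to show that when $p(x,y) > 0$ everywhere on $\cX \times \cY$, the messages $M_A = \phi_A(x^n)$ and $M_B = \phi_B(y^n)$ already determine $f(x_i, y_i)$ for every coordinate $i$, so the relay---who sees exactly this pair---can compute the function. Since the relay's broadcast $M_C$ is a function of $(M_A, M_B)$, it suffices to prove that the map $(M_A, M_B) \mapsto (f(x_1,y_1), \dots, f(x_n,y_n))$ is well-defined.

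First I would exploit the zero-error constraint at node A. Fix any $y^n \in \cY^n$ and suppose $x^n, \tilde{x}^n$ satisfy $\phi_A(x^n) = \phi_A(\tilde{x}^n)$. Then $M_C = \phi_C(\phi_A(x^n), \phi_B(y^n))$ is identical whether the true source sequence at A is $x^n$ or $\tilde{x}^n$. Node A decodes using $(x^n, M_C)$ in one case and $(\tilde{x}^n, M_C)$ in the other; since $p(x,y)>0$ always, every coordinate pair is in $S_{XY}$, so the zero-error requirement forces $(\psi_A(x^n, M_C))_i = f(x_i, y_i)$ and $(\psi_A(\tilde{x}^n, M_C))_i = f(\tilde{x}_i, y_i)$. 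These two decoded sequences need not agree with each other---they depend on the distinct first arguments $x^n$ versus $\tilde{x}^n$---so this alone does not pin down $f$. The symmetric statement holds at B: if $\phi_B(y^n) = \phi_B(\tilde{y}^n)$ then for every $x^n$, $f(x_i,y_i) = (\psi_B(y^n, M_C))_i$ and $f(x_i,\tilde{y}_i) = (\psi_B(\tilde{y}^n,M_C))_i$ with the same $M_C$.

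The key step is to combine these two observations. Suppose $(M_A, M_B)$ arises from two source pairs $(x^n, y^n)$ and $(\tilde{x}^n, \tilde{y}^n)$, i.e.\ $\phi_A(x^n) = \phi_A(\tilde{x}^n) = M_A$ and $\phi_B(y^n) = \phi_B(\tilde{y}^n) = M_B$. I want to show $f(x_i, y_i) = f(\tilde{x}_i, \tilde{y}_i)$ for all $i$. Introduce the hybrid pair $(x^n, \tilde{y}^n)$: since $p>0$ everywhere, this is a legitimate source realization, and crucially it produces messages $\phi_A(x^n) = M_A$ and $\phi_B(\tilde{y}^n) = M_B$, hence the same $M_C$ as both original pairs. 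Now apply zero-error decoding at A for the realization $(x^n, \tilde{y}^n)$: node A sees $(x^n, M_C)$, exactly what it sees under $(x^n, y^n)$, so $(\psi_A(x^n, M_C))_i = f(x_i, \tilde{y}_i)$; but it also equals $f(x_i, y_i)$. Thus $f(x_i, y_i) = f(x_i, \tilde{y}_i)$ for all $i$. Symmetrically, apply zero-error decoding at B for $(x^n, \tilde{y}^n)$ versus $(\tilde{x}^n, \tilde{y}^n)$: B sees $(\tilde{y}^n, M_C)$ in both, so $f(x_i, \tilde{y}_i) = f(\tilde{x}_i, \tilde{y}_i)$. Chaining, $f(x_i, y_i) = f(\tilde{x}_i, \tilde{y}_i)$, as required. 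Hence $f(x_i,y_i)$ depends only on $(M_A, M_B)$, so the relay computes it.

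The main obstacle is recognizing that neither zero-error constraint by itself suffices---node A's decoder legitimately depends on $x^n$, node B's on $y^n$---and that the resolution is to route through the hybrid realization $(x^n, \tilde{y}^n)$, which is only guaranteed to be a valid (positive-probability) input because of the full-support hypothesis $p(x,y)>0$ on all of $\cX\times\cY$. Once the hybrid is available, each decoder's output is forced from the ``side'' that it shares with a genuine realization, and the two forcings compose. I would also remark that full support is exactly what makes the hybrid argument go through: without it, $(x^n,\tilde y^n)$ might contain coordinate pairs outside $S_{XY}$, on which the zero-error guarantee says nothing, and the conclusion can fail.
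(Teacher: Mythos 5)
Your proof is correct and rests on the same key idea as the paper's: the full-support hypothesis makes the hybrid realization $(x^n,\tilde y^n)$ a positive-probability input that generates the same $(M_A,M_B)$, hence the same $M_C$, so the zero-error constraints at A and B each force one equality of function values. The only difference is organizational --- you argue directly via the chain $f(x_i,y_i)=f(x_i,\tilde y_i)=f(\tilde x_i,\tilde y_i)$, which uniformly subsumes the paper's proof by contradiction and its case split on whether $x_i=x_i'$.
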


\section{Proofs of the results}
\label{sec:proofs}
To prove Theorem~\ref{Rate_Region}, we first present some lemmas.
\begin{lemma}
 For $n=1$, and given the encoding functions $\phi_A,\phi_B,\phi_C$, 
the nodes A and B can recover $f(X,Y)$ with zero error if and only if
$\phi_C \circ (\phi_A \times \phi_B)$ is a coloring of 
$G_{XY}^f$.
 \label{Relay_mapping_n_1}
\end{lemma}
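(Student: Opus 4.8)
The plan is to prove both directions directly from the definition of a zero-error scheme specialized to $n=1$, using the simple observation that for $n=1$ the decoder at A is a deterministic function of the pair $(X,M_C)$, where $M_C = \phi_C(\phi_A(X),\phi_B(Y))$, and symmetrically the decoder at B is a function of $(Y,M_C)$. Throughout, write $c \triangleq \phi_C \circ (\phi_A \times \phi_B)$ for the composed map, restricted to the vertex set $S_{XY}$ of $G_{XY}^f$.

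For the ``if'' direction, suppose $c$ is a proper coloring of $G_{XY}^f$. I would define $\psi_A$ by: on input $(x,m)$, output $f(x,y)$ for some $y$ with $(x,y)\in S_{XY}$ and $c(x,y)=m$ (and output anything if no such $y$ exists). This is well defined: if $(x,y)$ and $(x,y')$ both lie in $S_{XY}$ with $c(x,y)=c(x,y')=m$, then $f(x,y)=f(x,y')$, for otherwise $(x,y)$ and $(x,y')$ would be distinct vertices of $G_{XY}^f$ sharing their first coordinate with distinct $f$-values, hence adjacent, contradicting that $c$ is a proper coloring. Consequently, for the true realization $(x,y)\in S_{XY}$ we get $\psi_A(x,c(x,y))=f(x,y)$. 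Defining $\psi_B$ symmetrically (using the second coordinate), the quintuple $(\phi_A,\phi_B,\phi_C,\psi_A,\psi_B)$ is a zero-error scheme.

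For the ``only if'' direction, suppose a zero-error scheme with these encoders exists, and let $(x_1,y_1)$ and $(x_2,y_2)$ be adjacent vertices of $G_{XY}^f$; then both lie in $S_{XY}$, $f(x_1,y_1)\neq f(x_2,y_2)$, and, by adjacency in the rook's graph, either $x_1=x_2$ or $y_1=y_2$. In the first case, if $c(x_1,y_1)=c(x_2,y_2)$, then A is presented with the identical pair $(x_1,c(x_1,y_1))$ under both realizations, so $\psi_A$ returns a single value, which the zero-error property forces to equal both $f(x_1,y_1)$ and $f(x_2,y_2)$, a contradiction. Hence $c(x_1,y_1)\neq c(x_2,y_2)$. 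The case $y_1=y_2$ is identical with the roles of A and B interchanged. Thus $c$ assigns distinct colors to adjacent vertices, i.e.\ $c$ is a proper coloring of $G_{XY}^f$.

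I do not anticipate a genuine obstacle here; the only points requiring care are (a) verifying that the decoder constructed in the ``if'' direction is well defined, which is precisely where the rook's-graph adjacency rule ``$x=x'$ or $y=y'$'' enters, and (b) keeping the entire argument confined to $S_{XY}$, since the zero-error requirement only constrains realizations in the support.
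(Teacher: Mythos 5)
Your proof is correct and follows essentially the same route as the paper: both arguments rest on the observation that every edge of $G_{XY}^f$ joins two support pairs sharing a coordinate and having distinct $f$-values, so the composed map separates adjacent vertices exactly when each node can decode from its own observation plus the relay's message. The paper compresses this into a chain of equivalences, whereas you usefully make the decoder construction in the ``if'' direction explicit, but no new idea is involved.
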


\begin{proof} 
Let $E(G_{XY}^f)$ denote the set of edges of $G_{XY}^f$.
Note that 
\begin{align}
 E(G_{XY}^f) & = \{((x,y),(x,y')) \in S_{XY}| f(x,y) \neq f(x,y')\}\notag \\
& \cup \{((x,y),(x',y)) \in S_{XY}| f(x,y) \neq f(x',y)\}
\label{eq:edgeset}
\end{align}
Note that each edge is of the form $((x,y),(x,y'))$ or $((x,y),(x',y))$.

A and B can recover $f(X,Y)$ with zero error $\Leftrightarrow$ (i)
for any $(x,y),(x,y')\in S_{XY}$ with
$f(x,y) \neq f(x,y')$, $\phi_C(\phi_A(x),\phi_B(y)) \neq 
\phi_C(\phi_A(x),\phi_B(y'))$ and (ii) for any $(x,y),(x',y)\in S_{XY}$ with
$f(x,y) \neq f(x',y)$, $\phi_C(\phi_A(x),\phi_B(y)) \neq 
\phi_C(\phi_A(x'),\phi_B(y))$ 
$\Leftrightarrow$ 
for any $((x,y),(x',y')) \in E(G_{XY}^f)$, $\phi_C(\phi_A(x),\phi_B(y)) \neq
\phi_C(\phi_A(x'),\phi_B(y'))$ $\Leftrightarrow$ $\phi_C \circ (\phi_A \times \phi_B)$ is a coloring of 
$G_{XY}^f$.
\end{proof}

Clearly, whenever $\phi_C\circ (\phi_A\times \phi_B)$ is a coloring of 
$G_{XY}^f$,  $\phi_A\times\phi_B$ is also a coloring of 
$G_{XY}^f$. The following lemma gives a necessary and sufficient
condition for $(\phi_A,\phi_B)$ to be a coloring of $G_{XY}^f$.

\begin{lemma}
 $(\phi_A,\phi_B)$ is a coloring of $G_{XY}^f$ if and only if $\phi_A$ is a coloring 
 of $G_{X|Y}^f$ and $\phi_B$ is a coloring of  $G_{Y|X}^f$.
 \label{Nodes_mapping_n_1}
\end{lemma}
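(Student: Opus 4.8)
The plan is to reduce a proper coloring of $G_{XY}^f$ by the product map $\phi_A\times\phi_B$ to two independent, single-coordinate coloring conditions, exploiting the fact that every edge of $G_{XY}^f$ is a ``rook move'' that changes only one of the two coordinates. First I would recall from \eqref{eq:edgeset} that each edge of $G_{XY}^f$ has one of two forms: either a ``vertical'' edge $((x,y),(x,y'))$ with $(x,y),(x,y')\in S_{XY}$ and $f(x,y)\neq f(x,y')$, or a ``horizontal'' edge $((x,y),(x',y))$ with $(x,y),(x',y)\in S_{XY}$ and $f(x,y)\neq f(x',y)$. For a vertical edge, $(\phi_A(x),\phi_B(y))\neq(\phi_A(x),\phi_B(y'))$ holds if and only if $\phi_B(y)\neq\phi_B(y')$; symmetrically, for a horizontal edge the product map separates the endpoints iff $\phi_A(x)\neq\phi_A(x')$.

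For the ``if'' direction, assume $\phi_A$ is a coloring of $G_{X|Y}^f$ and $\phi_B$ is a coloring of $G_{Y|X}^f$. Given any vertical edge $((x,y),(x,y'))$ of $G_{XY}^f$, the vertex $x$ witnesses $(x,y),(x,y')\in S_{XY}$ with $f(x,y)\neq f(x,y')$, which is exactly the defining condition for $y$ and $y'$ to be adjacent in $G_{Y|X}^f$; hence $\phi_B(y)\neq\phi_B(y')$ and the two endpoints receive distinct colors under $\phi_A\times\phi_B$. The horizontal case is symmetric using $\phi_A$ and $G_{X|Y}^f$, so $\phi_A\times\phi_B$ is a coloring of $G_{XY}^f$.

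For the ``only if'' direction, assume $\phi_A\times\phi_B$ is a coloring of $G_{XY}^f$. If $y,y'$ are adjacent in $G_{Y|X}^f$, take the witness $x$ with $(x,y),(x,y')\in S_{XY}$ and $f(x,y)\neq f(x,y')$; then $((x,y),(x,y'))$ is an edge of $G_{XY}^f$, so its endpoints get distinct colors, which forces $\phi_B(y)\neq\phi_B(y')$. Thus $\phi_B$ is a coloring of $G_{Y|X}^f$, and by the symmetric argument $\phi_A$ is a coloring of $G_{X|Y}^f$.

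I do not expect a genuine obstacle here; the argument is essentially a matching-up of definitions. The one point that requires care is that the witness vertex supplied by the definition of $G_{Y|X}^f$ (resp. $G_{X|Y}^f$) lies in $S_{XY}$ in \emph{both} of the relevant pairs, so that the corresponding pair is an edge of the restricted graph $G_{XY}^f$ and not merely of the unrestricted rook's graph $G_{XY}$; keeping track of the $S_{XY}$ membership throughout is what makes both implications go through cleanly.
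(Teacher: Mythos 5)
Your proof is correct and follows essentially the same route as the paper's: both arguments decompose the edge set of $G_{XY}^f$ into vertical and horizontal rook moves via \eqref{eq:edgeset} and observe that $\phi_A\times\phi_B$ separates the endpoints of such an edge exactly when the corresponding single-coordinate map does, which matches the definitions of $G_{X|Y}^f$ and $G_{Y|X}^f$. Your version merely spells out the two directions that the paper compresses into a chain of equivalences, and your remark about tracking $S_{XY}$ membership is the right point of care.
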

\begin{proof}
$\phi_A$ and $\phi_B$ are colorings of $G_{X|Y}^f$ and $G_{Y|X}^f$ respectively
$\Leftrightarrow$ (i)
 any $(x,y), (x,y') \in S_{XY}$ with
$f(x,y) \neq f(x,y')$, satisfies $(\phi_A(x),\phi_B(y))\neq (\phi_A(x),\phi_B(y'))$ and (ii) any $(x,y), (x',y) \in S_{XY}$
with $f(x,y) \neq f(x',y)$ satisfies $(\phi_A(x),\phi_B(y))\neq 
(\phi_A(x'),\phi_B(y))$ $\Leftrightarrow$ $\phi_A \times \phi_B$ is a 
coloring of $G_{XY}^f$. We used \eqref{eq:edgeset} in the last equivalence.
\end{proof}

We now extend the above lemmas for $n$-length schemes.

\begin{lemma}
 For any $n$, and given the encoding functions $\phi_A,\phi_B,\phi_C$, 
the nodes A and B can recover $f(X_i,Y_i);\; i=1,2,\cdots,n$ with zero error if and only if
$\phi_C \circ (\phi_A \times \phi_B)$ is a coloring of 
$(G_{XY}^f)^{\vee n}$.
 \label{Relay_mapping_n}
\end{lemma}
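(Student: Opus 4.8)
The plan is to reduce the $n$-length statement to the single-letter case (Lemma~\ref{Relay_mapping_n_1}) by viewing an $n$-length scheme as a single-letter scheme over the product alphabets $\cX^n$ and $\cY^n$. The key observation is that a block code of length $n$ with encoders $\phi_A:\cX^n\to\{0,1\}^*$, $\phi_B:\cY^n\to\{0,1\}^*$, $\phi_C$ is exactly a length-$1$ scheme for the source pair $(X^n,Y^n)$, except that the target is not a single function value but the vector of function values $\big(f(X_1,Y_1),\ldots,f(X_n,Y_n)\big)$, and the zero-error requirement is imposed componentwise only on those coordinates $i$ with $(X_i,Y_i)\in S_{XY}$.

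First I would make precise which graph on $\cX^n\times\cY^n$ captures this requirement and check that it is exactly $(G_{XY}^f)^{\vee n}$. By definition of the OR product, two vertices $(x^n,y^n)$ and $(x'^n,y'^n)$ of $(G_{XY}^f)^{\vee n}$ are adjacent iff $((x_i,y_i),(x'_i,y'_i))\in E(G_{XY}^f)$ for some coordinate $i$; by \eqref{eq:edgeset} this means there is an $i$ with $(x_i,y_i),(x'_i,y'_i)\in S_{XY}$, with $x_i=x'_i$ or $y_i=y'_i$, and $f(x_i,y_i)\neq f(x'_i,y'_i)$. So $(G_{XY}^f)^{\vee n}$ is precisely the ``$f^{(n)}$-modified rook's graph'' for the coordinatewise problem, where the confusable pairs are those differing in a single coordinate (one of $x^n$, $y^n$) and the vertex set is the set of $n$-tuples all of whose entries lie in $S_{XY}$. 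I would state this identification as a short claim and dispatch it by directly comparing the two edge sets.

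Next, with that identification in hand, I would rerun the argument of Lemma~\ref{Relay_mapping_n_1} verbatim at block length $n$. The ``only if'' direction: if for some adjacent pair $(x^n,y^n),(x'^n,y'^n)$ the composite map $\phi_C\circ(\phi_A\times\phi_B)$ collapses them, then picking a witness coordinate $i$ as above, node A (holding $x^n=x'^n$ on a suitable sub-case, or more carefully: we only need one of the two vertices to be distinguished), receiving the same broadcast message and the same $\phi_A$-output, cannot decide between $f(x_i,y_i)$ and $f(x'_i,y'_i)$ in coordinate $i$ --- and here I must be a little careful, because adjacency in the OR product only forces $x_i=x'_i$ or $y_i=y'_i$ for \emph{some} coordinate, not the same kind of agreement in all coordinates; the clean way is to argue, exactly as in Lemma~\ref{Relay_mapping_n_1}, that decodability coordinate-by-coordinate is equivalent to: for every pair of vertices differing in a single coordinate of one user with differing $f$-value there, the composite must separate them, and then observe (via \eqref{eq:edgeset} applied in each coordinate plus the OR-product definition) that the set of all such ``one-coordinate-differing'' confusable pairs generates precisely $E((G_{XY}^f)^{\vee n})$. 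The ``if'' direction is immediate: if $\phi_C\circ(\phi_A\times\phi_B)$ is a proper coloring of $(G_{XY}^f)^{\vee n}$, then in particular any two inputs $(x^n,y^n),(x'^n,y'^n)$ that agree on A's side (i.e.\ $x^n=x'^n$) and have $(x_i,y_i),(x_i,y'_i)\in S_{XY}$ with $f(x_i,y_i)\neq f(x_i,y'_i)$ are adjacent, hence receive distinct broadcast-plus-$\phi_A$ values, so A can extract $f(X_i,Y_i)$; symmetrically for B.

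I expect the main obstacle to be bookkeeping rather than any real mathematical difficulty: specifically, handling the vertex set carefully (the OR product is taken over $G_{XY}^f$ whose vertices are already restricted to $S_{XY}$, so $V((G_{XY}^f)^{\vee n}) = (S_{XY})^n$, whereas the zero-error condition in the problem statement only constrains coordinates $i$ with $(X_i,Y_i)\in S_{XY}$ while allowing arbitrary out-of-support coordinates), and making the ``only if'' direction airtight despite the asymmetry of OR-product adjacency. The fix is the standard one used already in Lemma~\ref{Relay_mapping_n_1}: characterize zero-error decodability as separation of every pair confusable in a single coordinate for a single user, then note that the union of these pair-sets, taken over all coordinates and both users, is exactly $E((G_{XY}^f)^{\vee n})$, because an OR-product edge is witnessed by a single coordinate and \eqref{eq:edgeset} splits each such single-coordinate edge of $G_{XY}^f$ into an A-type and a B-type confusion. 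Given Lemmas~\ref{Relay_mapping_n_1} and the edge-set identity \eqref{eq:edgeset}, the whole proof is a one-paragraph lift of the $n=1$ argument to the product graph.
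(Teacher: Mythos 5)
The step that fails is your central claim that the union of the ``one-coordinate-differing'' confusable pairs --- pairs of the form $((x^n,y^n),(x^n,y'^n))$ coming from decoder A and $((x^n,y^n),(x'^n,y^n))$ coming from decoder B --- is exactly $E\left((G_{XY}^f)^{\vee n}\right)$. It is only a subset, and for $n\ge 2$ a proper one. Adjacency in the OR product requires $x_i=x'_i$ or $y_i=y'_i$ only at the single witness coordinate $i$; at every other coordinate both components may change, so an OR-product edge need not have $x^n=x'^n$ or $y^n=y'^n$. The zero-error requirement, on the other hand, only forces separation of pairs that a single decoder can actually confuse, i.e.\ pairs sharing the same full $x^n$ (for A) or the same full $y^n$ (for B). So your argument establishes ``zero error $\Leftrightarrow$ every A-type and every B-type pair is separated,'' which is strictly weaker than ``$\phi_C\circ(\phi_A\times\phi_B)$ is a coloring of $(G_{XY}^f)^{\vee n}$,'' and the ``only if'' direction does not follow. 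You flag exactly this asymmetry in your sketch, but the fix you propose (that these pairs ``generate'' the whole edge set) is the false statement.

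The gap is not cosmetic. Take $n=2$, $\cX=\cY=\{0,1\}$ with full support, $f(x,y)=x\wedge y$, and $\phi_A,\phi_B$ the identity maps. The vertices $u$ with $(x^2,y^2)=((1,0),(1,1))$ and $v$ with $(x^2,y^2)=((1,1),(0,1))$ are adjacent in $(G_{XY}^f)^{\vee 2}$ (witness coordinate $1$: $x_1=x'_1=1$ and $f(1,1)\neq f(1,0)$), yet they lie in different rows and different columns of the $\phi_A\times\phi_B$ grid. Starting from the relay map $\phi_C(x^2,y^2)=(x_1\wedge y_1,\,x_2\wedge y_2)$ and recoloring exactly the two cells $u,v$ with one common fresh color, one checks that within every row A can still determine $(f(x_1,y_1),f(x_2,y_2))$ and within every column B can too, so the scheme remains zero-error while $\phi_C\circ(\phi_A\times\phi_B)$ is not a proper coloring of $(G_{XY}^f)^{\vee 2}$. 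Hence this step cannot be repaired by bookkeeping. For what it is worth, the paper's own proof makes the identical leap --- it derives the A-type and B-type separation requirements and then simply asserts the coloring condition --- so your proposal faithfully reproduces the paper's argument, gap included. A version of the lemma that your (and the paper's) argument does support replaces $(G_{XY}^f)^{\vee n}$ by its subgraph containing only the A-type and B-type edges; and the support issue you correctly raised (sequences with out-of-support coordinates still constrain the decoders, yet are not vertices of the product graph) is a second, independent defect in the ``if'' direction when $S_{XY}\neq\cX\times\cY$.
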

\begin{proof}
 Consider some $x^n \in \cX^n$ and $y^n, y'^n \in \cY^n$ 
 such that for some $i$, $p(x_i,y_i).p(x_i,y_i') >0$ and
 $f(x_i,y_i) \neq f(x_i,y_i')$. For pairs $(x^n,y^n)$ and $(x^n,y'^n)$ 
 node A should receive different data from the relay 
 for zero error computation in the unrestricted setup. 
 Similarly node B should receive different data for pairs $(x^n,y^n)$ 
and $(x'^n,y^n)$ if for some $i$,
 $f(x_i,y_i) \neq f(x_i',y_i)$ and $(x_i,y_i),(x_i',y_i)\in S_{XY}$. 
 This shows that zero-error computation is possible if and only if 
$\phi_C\circ (\phi_A\times\phi_B)$ is a coloring of $(G_{XY}^f)^{\vee n}$.
\end{proof}

\begin{lemma}
 $\phi_A\times\phi_B$ is a coloring of $(G_{XY}^f)^{\vee n}$ if and only if $\phi_A$ is a coloring 
 of $(G_{X|Y}^f)^{\vee n}$ and $\phi_B$ is a coloring of $(G_{Y|X}^f)^{\vee n}$.
 \label{Nodes_mapping_n}
\end{lemma}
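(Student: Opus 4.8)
The plan is to lift Lemma~\ref{Nodes_mapping_n_1} to block length $n$ in the same way Lemma~\ref{Relay_mapping_n} lifted Lemma~\ref{Relay_mapping_n_1}: rewrite ``being a coloring'' as a coordinate-wise separation condition, using the definition of $(\cdot)^{\vee n}$ together with the edge description~\eqref{eq:edgeset}. Concretely, two vertices $\big((x_1,y_1),\dots,(x_n,y_n)\big)$ and $\big((x'_1,y'_1),\dots,(x'_n,y'_n)\big)$ of $(G_{XY}^f)^{\vee n}$ are adjacent iff for some coordinate $k$ either $x_k=x'_k$ and $f(x_k,y_k)\neq f(x_k,y'_k)$, or $y_k=y'_k$ and $f(x_k,y_k)\neq f(x'_k,y_k)$; and $x^n,x'^n$ are adjacent in $(G_{X|Y}^f)^{\vee n}$ iff for some $k$ there is $y$ with $(x_k,y),(x'_k,y)\in S_{XY}$ and $f(x_k,y)\neq f(x'_k,y)$, and symmetrically for $(G_{Y|X}^f)^{\vee n}$.

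For the ``if'' direction, assume $\phi_A$ is a coloring of $(G_{X|Y}^f)^{\vee n}$ and $\phi_B$ is a coloring of $(G_{Y|X}^f)^{\vee n}$, and take any edge of $(G_{XY}^f)^{\vee n}$ with a witness coordinate $k$. If the witness is of the ``$x_k=x'_k$'' type then $(y_k,y'_k)\in E(G_{Y|X}^f)$, so $y^n,y'^n$ are adjacent in $(G_{Y|X}^f)^{\vee n}$ and $\phi_B(y^n)\neq\phi_B(y'^n)$; if it is of the ``$y_k=y'_k$'' type then symmetrically $\phi_A(x^n)\neq\phi_A(x'^n)$. In either case $\phi_A\times\phi_B$ separates the two endpoints, so it is a coloring of $(G_{XY}^f)^{\vee n}$. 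This is the same case split as in Lemma~\ref{Nodes_mapping_n_1} and should carry over directly.

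For the converse, assume $\phi_A\times\phi_B$ is a coloring of $(G_{XY}^f)^{\vee n}$; by symmetry it suffices to show $\phi_A$ is a coloring of $(G_{X|Y}^f)^{\vee n}$. Given an edge $(x^n,x'^n)$ of $(G_{X|Y}^f)^{\vee n}$ with witness coordinate $k$ and witness symbol $\tilde y$, I would promote it to an edge of $(G_{XY}^f)^{\vee n}$: put $(x_k,\tilde y)$ and $(x'_k,\tilde y)$ in coordinate $k$ (a valid ``$y_k=y'_k$'' edge) and, for $j\neq k$, complete both vertices to points of $S_{XY}$ using a \emph{common} $\cY$-symbol $y_j$ on the two sides, so that the two completed $\cY$-tuples coincide; then the endpoints are adjacent in $(G_{XY}^f)^{\vee n}$, $\phi_B$ assigns them the same label, and $\phi_A\times\phi_B$ can separate them only through its $\phi_A$-component, forcing $\phi_A(x^n)\neq\phi_A(x'^n)$.

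The step I expect to be the main obstacle is exactly this completion: a common symbol $y_j$ with $(x_j,y_j),(x'_j,y_j)\in S_{XY}$ exists whenever the conditional supports of $x_j$ and $x'_j$ overlap (which is vacuous for $n=1$, where there are no coordinates $j\neq k$), but it can fail when these supports are disjoint. To carry the converse through in general one would either impose a mild regularity condition on $p_{XY}$ (for instance, that any two $\cX$-letters of positive probability share a letter in their $\cY$-supports) or read the product graphs $(G_{X|Y}^f)^{\vee n}$ and $(G_{Y|X}^f)^{\vee n}$ as restricted to the $n$-tuples that admit valid completions in $S_{XY}$; I would first settle the generic case under such a hypothesis and then check that the restricted reading is exactly what the rate bounds in Theorem~\ref{Rate_Region} need.
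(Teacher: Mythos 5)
Your ``if'' direction is complete and is exactly the paper's argument: an edge of $(G_{XY}^f)^{\vee n}$ already has both endpoints in $(S_{XY})^n$, so no completion is needed, and the witness coordinate $k$ directly produces an edge of $(G_{Y|X}^f)^{\vee n}$ (when $x_k=x'_k$) or of $(G_{X|Y}^f)^{\vee n}$ (when $y_k=y'_k$), which $\phi_B$ or $\phi_A$ respectively must separate.

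The obstacle you isolate in the converse is genuine, and you should know that the paper's own proof steps over it rather than resolving it. From ``$\phi_A$ is not a coloring of $(G_{X|Y}^f)^{\vee n}$'' the paper asserts the existence of $x^n,x'^n$ \emph{and a single} $y^n$ with $(x_i,y_i),(x'_i,y_i)\in S_{XY}$ at the witness coordinate, and then declares $((x^n,y^n),(x'^n,y^n))$ to be an edge of $(G_{XY}^f)^{\vee n}$ --- but for these to be vertices of that graph one needs $(x_j,y_j),(x'_j,y_j)\in S_{XY}$ for \emph{every} $j$, which is precisely your common-completion requirement and is guaranteed only at $j=i$. When some pair $x_j,x'_j$ has disjoint conditional $\cY$-supports, the OR-product edge $(x^n,x'^n)$ lifts to no edge of $(G_{XY}^f)^{\vee n}$ with equal $\cY$-parts; taking $\phi_B$ injective then separates every lifted pair through its second component alone, so $\phi_A\times\phi_B$ can color $(G_{XY}^f)^{\vee n}$ while $\phi_A$ merges $x^n$ and $x'^n$. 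So your diagnosis is right: the converse (and with it the $R_A,R_B$ bounds of Lemma~\ref{Lem_Outer_n} as derived from it) is justified under full support $S_{XY}=\cX\times\cY$ or your overlap hypothesis, and in general the correct object on the $\cX$-side is the subgraph of $(G_{X|Y}^f)^{\vee n}$ whose edges admit a common completion inside $(S_{XY})^n$. You are not missing an idea that the paper supplies; you have made explicit a hypothesis the paper uses implicitly.
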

\begin{proof}
If $\phi_A$ is not a coloring of $(G_{X|Y}^f)^{\vee n}$, then $\exists \:
x^n, x'^n \in \cX^n$ and $y^n \in \cY^n$ with 
$(x_i,y_i),(x_i',y_i) \in S_{XY}, f(x_i,y_i) \neq f(x_i',y_i)$ for some
$i$ such that $\phi_A(x^n) = \phi_A(x'^n)$. Then $((x^n,y^n),(x'^n,y^n))
\in E(G_{XY}^f)^{\vee n}$, and $(\phi_A(x^n),\phi_B(y^n))
=(\phi_A(x'^n),\phi_B(y^n))$. So $\phi_A\times \phi_B$
is not a coloring of 
$(G_{XY}^f)^{\vee n}$. Similarly, if $\phi_B$ is not a coloring
of $(G_{Y|X}^f)^{\vee n}$, then $\phi_A\times\phi_B$ is not a coloring 
of $(G_{XY}^f)^{\vee n}$.
The opposite implication also follows along similar lines using \eqref{eq:edgeset}.
 \end{proof}

\begin{lemma}
  \label{Lem_Outer_n}
For a given $n$, the rate-triple of any length-$n$ zero-error scheme satisfies
 \begin{eqnarray*}
  R_A \geq  \frac{1}{n}H_{\chi}\left((G_{X|Y}^f)^{\vee n},X^n\right)\\
  R_B \geq  \frac{1}{n}H_{\chi}\left((G_{Y|X}^f)^{\vee n},Y^n\right)\\
  R_C \geq  \frac{1}{n} H_{\chi}\left((G_{XY}^f)^{\vee n},(X^n,Y^n)\right).
 \end{eqnarray*}
\end{lemma}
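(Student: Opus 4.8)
The plan is to lower bound each of the three rates by the corresponding chromatic entropy, using the structural lemmas already established. First I would handle $R_C$. By Lemma~\ref{Relay_mapping_n}, any length-$n$ zero-error scheme has the property that $\phi_C\circ(\phi_A\times\phi_B)$ is a coloring of $(G_{XY}^f)^{\vee n}$. Hence the random variable $\phi_C(\phi_A(X^n),\phi_B(Y^n))$ is (the output of) a coloring applied to $(X^n,Y^n)$, so by the definition of chromatic entropy,
\begin{equation*}
H\bigl(\phi_C(\phi_A(X^n),\phi_B(Y^n))\bigr) \geq H_{\chi}\bigl((G_{XY}^f)^{\vee n},(X^n,Y^n)\bigr).
\end{equation*}
It then remains to connect this entropy to the rate $R_C = \frac1n E|\phi_C(\phi_A(X^n),\phi_B(Y^n))|$. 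Since $\phi_C$ is a prefix-free code, its expected length is at least the entropy of the message it encodes (the converse part of the source coding / Kraft inequality argument), giving $nR_C \geq H(\phi_C(\phi_A(X^n),\phi_B(Y^n))) \geq H_{\chi}((G_{XY}^f)^{\vee n},(X^n,Y^n))$, which is the claimed bound on $R_C$.

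Next I would handle $R_A$ and $R_B$ symmetrically. By Lemma~\ref{Relay_mapping_n}, $\phi_C\circ(\phi_A\times\phi_B)$ being a coloring of $(G_{XY}^f)^{\vee n}$ forces $\phi_A\times\phi_B$ to be a coloring of $(G_{XY}^f)^{\vee n}$ as well (the remark following Lemma~\ref{Relay_mapping_n_1}, extended to block length $n$). Then Lemma~\ref{Nodes_mapping_n} tells us that $\phi_A$ must be a coloring of $(G_{X|Y}^f)^{\vee n}$ and $\phi_B$ a coloring of $(G_{Y|X}^f)^{\vee n}$. Consequently $H(\phi_A(X^n)) \geq H_{\chi}((G_{X|Y}^f)^{\vee n},X^n)$ and $H(\phi_B(Y^n)) \geq H_{\chi}((G_{Y|X}^f)^{\vee n},Y^n)$ directly from the definition of chromatic entropy. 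Combining again with the prefix-free converse bounds $nR_A \geq H(\phi_A(X^n))$ and $nR_B \geq H(\phi_B(Y^n))$ finishes the proof.

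The only mildly delicate point is the prefix-free converse step: $E|\phi(U)| \geq H(\phi(U))$. This is standard (Kraft's inequality plus Gibbs/Jensen), and it is valid because the image of a prefix-free code is itself prefix-free, so the lengths $|\phi_A(x^n)|$, $|\phi_B(y^n)|$, and $|\phi_C(\phi_A(x^n),\phi_B(y^n))|$ satisfy Kraft on the distinct codewords actually used. One should be slightly careful that two distinct source sequences may map to the same codeword — that is fine, since then $\phi_A(X^n)$ is a genuine random variable taking values in the codeword set and $H(\phi_A(X^n))$ is its entropy, and the inequality applies to this induced distribution. I do not anticipate any real obstacle; the content of the lemma is essentially repackaging the coloring characterizations of Lemmas~\ref{Relay_mapping_n} and~\ref{Nodes_mapping_n} through the definition of chromatic entropy together with the elementary prefix-code bound.
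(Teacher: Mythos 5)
Your proof is correct and follows exactly the route the paper intends: the paper's own proof is a one-line appeal to Lemmas~\ref{Relay_mapping_n} and~\ref{Nodes_mapping_n} and the definition of chromatic entropy, and you have simply spelled out the details, including the standard prefix-free converse $E|\phi(U)| \geq H(\phi(U))$ that the paper leaves implicit.
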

\begin{proof}
 This follows from the definition of chromatic entropy, Lemmas \ref{Relay_mapping_n} and \ref{Nodes_mapping_n}.
\end{proof}
\subsection{Proof of Theorem~\ref{Rate_Region}}
\noindent
\begin{proof}
{\it Proof of Part~(\ref{Part1})}: Lemma \ref{Relay_mapping_n}
implies that for encoding functions $\phi_A,\phi_B,\phi_C$ of 
any zero-error scheme, $\phi_A, \phi_B, \phi_C\circ (\phi_A\times\phi_B)$
is a color cover for $(G_{XY}^f)^{\vee n}$.
Similarly, for any  color cover $(c_A,c_B,c_C)$ of 
$(G_{XY}^f)^{\vee n}$, let $\phi_A,\phi_B$ be any prefix-free 
encoding functions of $c_A$ and $c_B$ respectively. 
Since $c_A\times c_B$ is a refinement of $c_C$,  there exists a
mapping $\theta_C$ such that $c_C = \theta_C \circ (c_A\times c_B)$.
Taking $\phi_C$ as any prefix-free encoding of $c_C$ yields
a scheme with encoding functions $(\phi_A,\phi_B,\phi_C)$.
Thus the result follows from the definition of graph entropy region 
$HR_{G_{XY}^f}(X,Y)$. \\

\noindent
{\it Proof of Part~(\ref{Part2})}:
Let us consider a {\it zero-error scheme} 
in which nodes A and B 
communicate $X^n$ and $Y^n$ to the relay node C. 
On knowing $X^n$ and $Y^n$, the relay broadcasts a minimum entropy coloring of
$(G_{XY}^f)^{\vee n}$ to A and B.
Using an optimum prefix free code for each communication, 
the achieved rates satisfy
 \begin{align*}
  R_A & \leq \frac{1}{n} H(X^n)+\frac{1}{n}\\
  R_B & \leq \frac{1}{n} H(Y^n)+\frac{1}{n} \\
\mbox{and }\\
R_C & \leq \frac{1}{n} H_{\chi}\left((G_{XY}^f)^{\vee n},(X^n,Y^n)\right) +\frac{1}{n}.
 \end{align*}
Taking $n\rightarrow \infty$, this proves that any point in $\cR_{I1}$ is achievable.

Now let us consider another scheme. Node A and B send a 
coloring of $(G_{X|Y}^f)^{\vee n}$ and $(G_{Y|X}^f)^{\vee n}$
respectively to C. C then broadcasts both these colors to A and B.
Again, using optimum prefix free zero error codes, the rates achieved satisfy
\begin{align*}
& R_A  \leq \frac{1}{n} H_{\chi}\left((G_{X|Y}^f)^{\vee n},X^n\right) +\frac{1}{n} \\
& R_B  \leq \frac{1}{n} H_{\chi}\left((G_{Y|X}^f)^{\vee n},Y^n\right) +\frac{1}{n}\\
\mbox{and }\\
R_C  \leq & \frac{1}{n}\left(H_{\chi}((G_{X|Y}^f)^{\vee n},X^n)
                  + H_{\chi}((G_{Y|X}^f)^{\vee n},Y^n) +2\right).
\end{align*}
Thus any point in $\cR_{I2}$ is achievable. Thus any point in
the convex hull $\cR_I$ of $\cR_{I1} \cup \cR_{I2}$ is also achievable
by time-sharing.\\

\noindent
{\it Proof of Part~(\ref{Part3})}:
As $n \rightarrow \infty$, Lemma~\ref{Lem_Outer_n} and \eqref{eq:gpentropy}
give the outer bound $\cR_O$.\\

\noindent
{\it Proof of Part~(\ref{Part4})}:
When $G$ is a complete graph, the only independent 
sets are the singletons of $V(G)$. So for $(G,X)$ with such a graph, 
the only choice of $W$ in \eqref{eq:gentropy} is $W=\{X\}$ and
\begin{equation*}
 H_G(X) = I(W;X) = H(X).
\end{equation*}
Thus if $G_{Y|X}^f$ and $G_{Y|X}^f$ are complete graphs, then $H_{G_{X|Y}^f}(X) = H(X)$
and $H_{G_{Y|X}^f}(Y) = H(Y)$. Then $\cR_{I1} = \cR_O$, and thus the result
follows.
\end{proof}
In the following, we provide an example of a function for which $\cR_I = \cR_O$.
\begin{example}
Let $X$ and $Y$ be uniformly distributed over $\{0,1,2\}$ and let $S_{XY} = \cX \times \cY$.
Nodes A and B want to compute $min(X,Y)$. For this example, the $f$-confusability graphs
$G_{X|Y}^f$ and $G_{Y|X}^f$ are complete graphs with vertex set $\{ 0,1,2\}$. 
So here we get $\cR_{I}=\cR_{I1} = \cR_O$.
\end{example}


 \subsection{Proof of Theorem~\ref{Relay_function}}
\begin{proof}
Let us consider any $n$. With abuse of notation, we denote the 
messages sent by the nodes A, B, and C by $\phi_A,\phi_B,\phi_C$ 
respectively. 
In the following, we omit the arguments, and denote $f(X^n,Y^n)$ by simply $f$.
Since the function is computed with zero error at nodes A and B, we have
$H(f|\phi_C,X^n)=0$ and $H(f|\phi_C,Y^n)=0$. 
We want to show that $H(f|\phi_A, \phi_B)=0$.
We prove this by contradiction. Let us assume that $H(f|\phi_A,\phi_B)>0$. 
Then $\exists$ $(x^n,y^n)$ and $(x'^n,y'^n)$ such that 
\begin{align}
& Pr(X=x^n,Y=y^n, \phi_A=k_1,\phi_B=k_2)>0 \label{prob_1}\\
& Pr(X=x'^n,Y=y'^n, \phi_A=k_1,\phi_B=k_2)>0 \label{prob_2}\\
\mbox{and } & f(x_i,y_i)\neq f(x'_i,y'_i) \mbox{ for some } i. \label{eq:fn1}
\end{align}
We consider two cases.

Case 1: $x_i=x'_i=x$.  
Since we have $Pr(X^n=x^n, \phi_A=k_1)>0$ (using \eqref{prob_1}), we get 
$Pr(X^n=x^n,Y^n=y'^n, \phi_A=k_1)= Pr(X^n=x^n, \phi_A=k_1)Pr(Y^n=y'^n|X^n=x^n)>0$. So we get 
\begin{equation}
 Pr(X^n=x^n,Y=y'^n, \phi_A=k_1,\phi_B=k_2)>0
 \label{prob_3}
\end{equation}
as $\phi_B(y'^n) = k_2$.

Taking $k_0=\phi_C(k_1,k_2)$, \eqref{prob_1}, \eqref{prob_3}
imply that
$Pr(X^n=x^n,Y^n=y^n, \phi_C=k_0)>0$ and $Pr(X^n=x^n,Y^n=y'^n, \phi_C=k_0)>0$.

This, together
with \eqref{eq:fn1} gives $H(f|\phi_C,X^n)>0$. Thus A can not recover $f(X,Y)$
with zero error - a contradiction.\\

Case 2: $x_i \neq x'_i$ and $y_i\neq y'_i$. Using \eqref{eq:fn1}, we get
either $f(x_i,y'_i) \neq f(x_i,y_i)$ or $f(x_i,y'_i) \neq f(x'_i,y'_i)$.
W.l.o.g, let us assume $f(x_i,y'_i) \neq f(x_i,y_i)$.
Then by combining \eqref{prob_3} and \eqref{prob_1}, and using the fact that
$f(x_i,y'_i) \neq f(x_i,y_i)$, we get $H(f|X^n,\phi_C) \neq 0$.
Thus A can not recover $f(X^n,Y^n)$ with zero error - a contradiction.
This completes the proof of the theorem.
\end{proof}

Theorem~\ref{Relay_function} does not hold if $S_{XY} \neq \cX\times \cY$. In
the following, we consider a simple example to demonstrate this.
Here nodes A and B recover the function with zero error, but the
relay can not reconstruct the function.
\begin{example}
Consider $X,Y \in \{1,2,3\}$
\begin{equation*}
 p(x,y) =\left\{
\begin{array}{cl}
  \frac{1}{6} & \quad \mbox{if} \; x\neq y\\
          0 & \quad \mbox{otherwise}
\end{array} \right.
\end{equation*}
and
\begin{equation*}
f(x,y) = \left\{
\begin{array}{cl}
 1 & \quad \mbox{if} \; x>y\\
            0 & \quad \mbox{if} \; x \leq y.
\end{array} \right. 
\end{equation*} 
Let $\phi_A,\phi_B$ and $\phi_C$ be as follows.
\begin{equation*}
  \phi_A =\left\{
\begin{array}{cl} 
  1 & \quad \mbox{if} \; x=1\\
         0 & \quad \mbox{otherwise.}
         \end{array} \right. 
\end{equation*}

\begin{equation*}
  \phi_B  =\left\{
\begin{array}{cl} 
  1 & \quad \mbox{if} \; y=1\\
         0 & \quad \mbox{otherwise.}
\end{array} \right. 
\end{equation*}

\begin{equation*}
 \phi_C =\left\{
\begin{array}{cl} 
 
  1 & \quad \mbox{if} \quad  \phi_A=\phi_B\\
         0 & \quad \mbox{otherwise.}
 \end{array} \right.        
\end{equation*}

In this example the relay cannot determine the function value in all the cases. When $\phi_A = \phi_B =0$, the function
value can be both $0$ and $1$. So here $H(f|\phi_A,\phi_B)>0$.
\end{example}
\section{Conclusion}
\label{Conclusion}
We provided a characterization of the rate region for our function computation problem
in terms of graph coloring and established single letter inner and outer bounds of the rate region.
A sufficient condition on the $f$-confusability graphs is identified 
under which these inner and outer bounds coincide.
We also showed
that if $p_{XY}$ is non-zero for all pairs of values, then the relay
can compute the function if both A and B can compute it.
We addressed the problem only for one-round protocols.  
Investigating the problem under multi-round protocols is an interesting
direction of future work.

\section*{Acknowledgment}
The work  was supported in part by the Bharti Centre for Communication, IIT Bombay and 
a grant from the Information Technology Research Academy, Media Lab Asia, to IIT Bombay.

\end{document}